\NewDocumentCommand{\entropy}{om}{\mathbb{H}\left[#2
    \IfValueT{#1}{\,\middle|\,#1}\right]}
\NewDocumentCommand{\bentropy}{lm}
  {\widetilde{\mathbb{H}}#1\left[#2\right]}
\NewDocumentCommand{\mutualInfo}{omm}{\mathbb{I}\left[#2;#3
    \IfValueT{#1}{\,\middle|\,#1}\right]}
\newtheorem{theorem}{Theorem}
\newtheorem{lemma}{Lemma}
\newtheorem{remark}{Remark}
\newtheorem{corollary}{Corollary}
\newlist{enumerate*}{enumerate*}{1}
\setlist[enumerate*]{label=(\arabic*)}
\newcommand{\ben}{\begin{eqnarray}}
\newcommand{\een}{\end{eqnarray}}
\newcommand{\yc}[1]{{\color{black}{#1}}}
\DeclareMathOperator*{\argmin}{arg\,min}
\def\U{{\mathcal{U}}}
\def\Z{{\mathcal{Z}}}
\def\H{{\mathcal{H}}}
\newcommand{\half}{ \mbox{\small$\frac{1}{2}$}}
\def\Det{{\mbox{Det}}}
\def\Tr{{\mbox{Tr}}}
\title{Robust Sequential Change-Point Detection by \\Convex Optimization}
\author{\IEEEauthorblockN{Yang Cao and Yao Xie}
\IEEEauthorblockA{H. Milton Stewart School of Industrial and Systems Engineering\\
Georgia Institute of Technology\\
\{caoyang, yao.xie\}@gatech.edu }}
\begin{document}
\maketitle

\begin{abstract}

We address the computational challenge of finding the robust sequential change-point detection procedures when the pre- and post-change distributions are not completely specified. Earlier works \cite{veeravalli1994minimax, unnikrishnan2011minimax} establish the general conditions for robust procedures which include finding a pair of least favorable distributions (LFDs). However, in the multi-dimensional setting, it is hard to find such LFDs computationally. We present a method based on convex optimization that addresses this issue when the distributions are Gaussian with unknown parameters from pre-specified uncertainty sets. We also establish theoretical properties of our robust procedures, and numerical examples demonstrate their good performance\footnote{Proofs to theorems can be found in the arXiv version of this paper: arXiv:1701.06952.}. 

\end{abstract}

%

\section{Introduction}

Sequential detection of an abrupt change has wide applications such as statistical quality control and network security monitoring. In the classic settings, one obtains a sequence of observations of a signal of which the distribution changes at some unknown point in time, referred to as the ``change-point''. The goal is to detect the change as quickly as possible, subject to the false-alarm constraint. With the ever growing complexity of systems and enlarging number of sensors to monitor the systems, multi-sensor change-point detection has become a quite important subject (see, \cite{tartakovsky2002efficient}, \cite{mei2010efficient} and \cite{xie2013sequential}).

Classic sequential change-point detection assumes that the distributions before and/or after the change-point are completely specified (e.g., the classic CUSUM for one-sensor \cite{page1954continuous,lorden1971procedures} and Shiryaev-Roberts procedure \cite{shiryaev1963optimum,lorden1971procedures}). Under this setting, CUSUM is optimal (see, e.g., \cite{moustakides1986optimal}). However, CUSUM procedure is known to be sensitive to the misspecified distributions \cite{stoumbos2000state}. 

Robust detector dates back to Huber's seminal work \cite{huber1965robust}. Subsequent follow-up work considers robust detector such as \cite{Moustakides85}. Huber considers the class of all symmetric densities that satisfies the so-called $\epsilon$-contamination model, with symmetric but unknown contaminations on the nominal distributions.  
Robust sequential change detection based on the above framework was considered in \cite{crow1994robust}. The more recent contributions \cite{veeravalli1994minimax, unnikrishnan2011minimax} introduce a so-called Joint Stochastic Boundedness (JSB), under which one can identify a pair of least favorable distributions (LFDs) from the uncertainty classes such that the CUSUM procedure designed for the LFDs is optimal for the robust problem in the minimax sense. 
However, in the multi-dimensional setting, there remains the computational challenge to establish robust sequential detection procedures or to find the LFDs. Closed-form LFDs can only be found for a few special one-dimensional cases (e.g,\cite{huberrobust} and \cite{levy2008principles}).  Moreover, the JSB condition in \cite{unnikrishnan2011minimax} is defined on the real line;  direct extension of JSB to multi-dimensional setting becomes quite restrictive even in very simple cases, illustrated in the following example. Consider two bivariate normal distributions. Assume that $\Sigma$ is a positive-definite matrix in $\mathbb{R}^{2\times 2}$, and we would like to detect a possible transition from 
$
\mathcal{P}_0 = \{\mathcal{N}(0,\Sigma)  \},
$
to a family of distributions
$
\mathcal{P}_1 = \{\mathbb{P}\mid \mathbb{P} = \mathcal{N}(\mu_1, \Sigma), \|\mu_1-(10,10)^T\|_2 \leq 1, \mu_1 \in \mathbb{R}^2 \}.
$
In this case, it is impossible to find a distribution in $\mathcal{P}_1$ that is stochastically larger than any other distribution in $\mathcal{P}_1$ due the following Lemma \ref{stochastical_order} which satisfies the JSB condition (also see Fig. \ref{fig:counterexample} for the illustration). 
\begin{lemma}[Theorem 5 in \cite{muller2001stochastic} ]
Let $X \sim \mathcal{N}(\mu, \Sigma)$ and $X' \sim \mathcal{N}(\mu', \Sigma')$ be $n$-dimensional normally distributed random vectors. Then $X'$ is stochastically larger than $X$ if and only if $\mu'^{(i)} \geq \mu^{(i)}, \mbox{ for all } 1\leq i \leq n$ and $\Sigma = \Sigma'$, where $\mu^{(i)}$ denotes the $i$th entry of $\mu$. 
\label{stochastical_order}
\end{lemma}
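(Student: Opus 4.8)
The plan is to prove the stated equivalence by working with the usual multivariate stochastic order, for which $X' \geq_{st} X$ means $\mathbb{E}[f(X')] \geq \mathbb{E}[f(X)]$ for every nondecreasing $f:\mathbb{R}^n \to \mathbb{R}$ (with respect to the componentwise partial order) for which both expectations exist, equivalently that there is a coupling $(\tilde X, \tilde X')$ with the correct marginals satisfying $\tilde X' \geq \tilde X$ almost surely. The sufficiency direction is the easy one and I would dispatch it by an explicit coupling: assuming $\Sigma = \Sigma'$ and $\mu'^{(i)} \geq \mu^{(i)}$ for all $i$, draw $Z \sim \mathcal{N}(0,\Sigma)$ and set $\tilde X = \mu + Z$ and $\tilde X' = \mu' + Z$, so that $\tilde X \sim \mathcal{N}(\mu,\Sigma)$, $\tilde X' \sim \mathcal{N}(\mu',\Sigma)$, and $\tilde X' - \tilde X = \mu' - \mu \geq 0$ coordinatewise. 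The coupling characterization then yields $X' \geq_{st} X$.

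For necessity I would separate the mean condition from the covariance condition. The mean condition is routine: applying the defining inequality to the coordinate projection $f_i(x) = x^{(i)}$, which is nondecreasing, gives $\mu'^{(i)} = \mathbb{E}[f_i(X')] \geq \mathbb{E}[f_i(X)] = \mu^{(i)}$ for each $i$. The covariance condition $\Sigma = \Sigma'$ is the main obstacle. The key idea is a reduction to one dimension: for any vector $c$ with nonnegative entries, the map $x \mapsto c^{T} x$ is nondecreasing, and indeed $h(c^{T}x)$ is nondecreasing for every nondecreasing $h:\mathbb{R}\to\mathbb{R}$, so $X' \geq_{st} X$ forces the univariate relation $c^{T} X' \geq_{st} c^{T} X$. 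These projections are one-dimensional Gaussians, $c^{T} X \sim \mathcal{N}(c^{T}\mu,\, c^{T}\Sigma c)$ and $c^{T} X' \sim \mathcal{N}(c^{T}\mu',\, c^{T}\Sigma' c)$.

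I would then invoke the elementary fact that two univariate normals can be stochastically ordered only when their variances coincide: if the variances differ, the two normal CDFs cross, so neither can dominate the other pointwise. This yields $c^{T}\Sigma c = c^{T}\Sigma' c$ for every $c \geq 0$. To finish, set $D = \Sigma' - \Sigma$, which is symmetric; the quadratic form $c^{T} D c$ is a polynomial in $c$ vanishing on the full-dimensional open orthant $\{c > 0\}$, hence it vanishes identically on $\mathbb{R}^n$, and symmetry of $D$ forces $D = 0$, i.e.\ $\Sigma = \Sigma'$. The only genuinely nontrivial step is this covariance argument, whose crux is the reduction to one dimension through nonnegative linear functionals combined with the crossing-CDF comparison for unequal univariate normal variances; the mean comparison and the closing polynomial-identity argument are routine.
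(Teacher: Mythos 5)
The paper gives no proof of this lemma at all---it is quoted verbatim, with attribution, as Theorem 5 of \cite{muller2001stochastic}---so the only meaningful comparison is with that cited source, and your argument is correct and essentially the standard one found there: a common-noise coupling $\tilde X = \mu + Z$, $\tilde X' = \mu' + Z$ for sufficiency; coordinate projections for the mean inequalities; and, for the covariance, the reduction $c^T X' \geq_{\mathrm{st}} c^T X$ for $c \geq 0$ combined with the crossing-CDF fact that univariate normals with unequal variances are never stochastically comparable. Two minor remarks. First, the closing step can be done even more elementarily than your polynomial-identity argument: evaluating $c^T \Sigma c = c^T \Sigma' c$ at $c = e_i$ gives $\Sigma_{ii} = \Sigma'_{ii}$, and at $c = e_i + e_j$ then gives $\Sigma_{ij} = \Sigma'_{ij}$, so no appeal to vanishing on an open set is needed (though your version is also valid, since a symmetric matrix with identically vanishing quadratic form is zero). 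Second, if degenerate covariances are allowed in the statement, your phrase ``the two normal CDFs cross'' should be supplemented by the one-line observation that a point mass is likewise never stochastically comparable in either direction to a nondegenerate normal, since the normal CDF lies strictly in $(0,1)$; this keeps the variance-equality step airtight when $c^T \Sigma c = 0 \neq c^T \Sigma' c$. Neither point is a gap in substance.
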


\begin{figure}[h!]
\vspace{-0.2in}
\begin{center}
\begin{tabular} {cc}
\includegraphics[width = .5\linewidth]{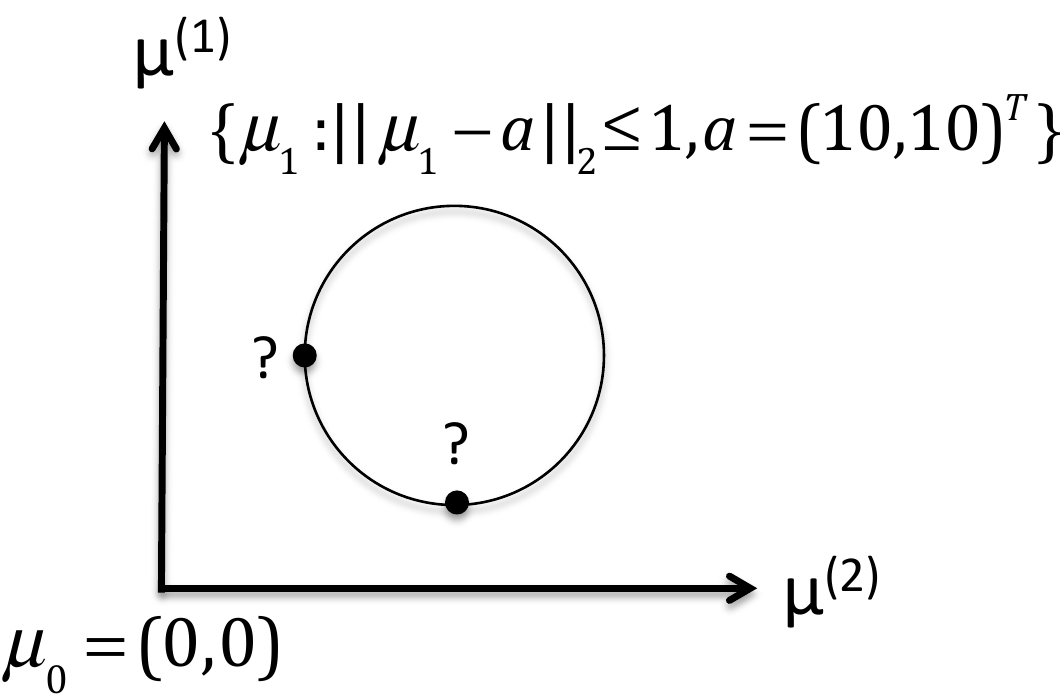}
\end{tabular}
\end{center}
\vspace{-0.1in}
\caption{It is impossible to find a point on the circle of which all the entries are larger than those of other points on the circle.}
\label{fig:counterexample}
\vspace{-0.1in}
\end{figure}

In this paper, we present a method of establishing the robust procedure by solving a convex optimization program. Given convex sets for parameters under the null and the alternative distributions, instead of identifying LFDs, we find a pair of {\it least favorable} parameters such that the Hellinger distance between the corresponding distributions from uncertainty sets are minimized. In this paper, we consider detecting the change in the mean vector and covariance matrix of a multivariate normal distribution, and hence, we may restrict our attention to linear and quadratic ``detectors'' (the methodology can be generalized to other sub-Gaussian distributions, see, e.g., \cite{guigues2016change}). Then a CUSUM procedure is defined for the pair of parameters solved from the optimization problem. We analyze the theoretical properties of our procedure. Note that since we use parametric models and represent uncertainty as ``uncertainty sets'' for the parameters, this is different from the previous work that identifies LFDs where the uncertainty class is represented by a set of probability functions. A benefit of our approach is that it leads to computationally more efficient methods.

Our approach is motivated by the recent work using convex optimization for hypothesis testing \cite{goldenshluger2015hypothesis,guigues2016change}. The difference of these approaches from our work is that they consider finite time-horizon sequential change-point detection problem and treat it as multiple hypothesis tests. Since for each time $t$, there are $k$ possible change-point locations, for a fixed time horizon $t\leq T$ there are a finite number of hypotheses. Then one may design a test such that the probability of error for each of the hypothesis is uniformly controlled and the total probability of error is less than a given level $\epsilon$. This approach is not convenient for infinite horizon setting considered in the usual sequential change-point detection problem. In this paper, we essentially proposal another approach for using the framework in \cite{goldenshluger2015hypothesis,guigues2016change} for infinitely horizon sequential hypothesis testing, where one can determine the Average Run Length (ARL) and Expected Detection Delay (EDD) analytically.

\vspace{-0.05in}
\section{Formulation}

\vspace{-0.05in}
\subsection{General setup}

Assume that we observe a sequence of observations $\{\xi_i\}_{i=1}^{\infty}$ that take values in $\mathcal{X}$. Denote $\mathcal{P}(\mathcal{X})$ as the set of all the probability distributions on $\mathcal{X}$ and assume that there are two known distributions $\nu_0, \nu_1 \in \mathcal{P}(\mathcal{X})$. If there is no change, the observations are drawn i.i.d. from distribution $\nu_0$. The probability and expectation in this case are denoted by $\mathbb{P}_{\infty}^{\nu_0}$ and $\mathbb{E}_{\infty}^{\nu_0}$, respectively. Alternatively, the i.i.d. observations $\xi_i \sim \nu_0$ for $i=1,\ldots, \kappa-1$, and at some \emph{unknown} change-point $\kappa$, the distributions of the observations switch abruptly to $\nu_1$, namely, $\xi_i \sim \nu_1$ for $i=\kappa, \kappa+1,\ldots$. The observations are  independent conditioned on the change-point $\kappa$. The probability and expectation in this case are denoted by $\mathbb{P}_{\kappa}^{\nu_0,\nu_1}$ and $\mathbb{E}_{\kappa}^{\nu_0, \nu_1}$, respectively.  In particular, $\kappa$ = 0 denotes an immediate change occurring at the initial time.

A sequential change detection procedure is characterized by a stopping time $T$ with respect to the observation sequence. To evaluate the performance of the detection procedure $T$, two performance measures are widely used: the average run length (ARL) and the expected detection delay (EDD). There are three commonly used mathematical formulations about ARL and EDD: Lorden's worst-case formulation in \cite{lorden1971procedures}, Pollak's average worst-case formulation in \cite{pollak1985optimal} and the Bayesian formulation in \cite{shiryaev1963optimum}. In this paper, we adopt the Lorden's formulation, where the worst-case EDD of a detection procedure $T$ is defined as follows:
\begin{equation}
\mbox{WDD}(T) = \sup_{k \geq 1} \mbox{esssup}~ \mathbb{E}_{k}^{\nu_0, \nu_1}\left[ (T-k+1)^+ \mid \mathcal{F}_{k-1} \right],
\label{WDD}
\end{equation}
where $(x)^+ = \max(x,0)$. The quantity in (\ref{WDD}) is called the worst-case EDD as a result of the two supreme appearing in (\ref{WDD}). The first supreme means that the detection delay is taken over all possible locations of the change-point $k$ and the second essential supreme means that the detection delay is taken over all possible realizations of the observations before the change-point $k$. ARL can be interpreted as the mean time between two false alarms, denoted by $\mathbb{E}_{\infty}^{\nu_0}[T]$. In practice, one usually fixes a lower bound $\gamma$ for the ARL and denotes $C(\gamma)$ as the set of stopping times with ARL larger than $\gamma>0$, in other words, $C(\gamma) = \left\{T: \mathbb{E}_{\infty}^{\nu_0}[T]\geq \gamma \right\}$. Then, our goal is to solve the following problem:
\begin{equation}
\min_{T \in C(\gamma)} ~\mbox{WDD}(T). 
\label{opti1}
\end{equation}
In \cite{lorden1971procedures} and \cite{moustakides1986optimal}, it has been proven that the cumulative sum (CUSUM) procedure \cite{page1954continuous} is both the asymptotically optimal solution as $\gamma \rightarrow \infty$ and the exact optimal solution to (\ref{opti1}) for any given $\gamma>0$. Hence, in the following, we will focus on CUSUM-type procedures.

\yc{Now we consider the case when $\nu_0$ and $\nu_1$ are not specified exactly but belong to two classes of distributions $\mathcal{P}_0, \mathcal{P}_1 \in \mathcal{P}(\mathcal{X})$, respectively (such definitions have been considered in \cite{unnikrishnan2011minimax}).}
Denote $C(\mathcal{P}_0, \gamma) = \{ T: \mathbb{E}_{\infty}^{\nu_0}[T] \geq \gamma, \forall \nu_0 \in \mathcal{P}_0 \}$ as the set of all candidate stopping times whose ARL is lower bounded by $\gamma$. Then our goal is to solve the following robust version of (\ref{opti1}):
\begin{equation}
\min_{T \in C(\mathcal{P}_0, \gamma)}~ \sup_{\nu_0 \in \mathcal{P}_0, \nu_1 \in \mathcal{P}_1} \mbox{WDD}(T).
\label{opti4}
\end{equation}

In the following, we specify the uncertainty set for parametric distributions, by assuming {\it convex uncertainty sets} for the parameters. This is a versatile formulation which finds useful in many practical situations (see, e.g., \cite{Robust09}).




{\it Mean  change:} Assume that we observe a sequence of $d$-dimensional multivariate normal distribution with a known covariance matrix that does change. At some time $\kappa$, the mean vector switches from $\mu_0, \mu_0 \in \mathcal{M}_0$ to $\mu_1, \mu_1 \in \mathcal{M}_1$, where $\mathcal{M}_0$ and $\mathcal{M}_1$ are two known convex sets in $\mathbb{R}^d$ that are user-specified beforehand. The observations are  independent conditioned on the change-point $\kappa$. Mathematically, we formulate the problem as the following hypothesis testing problem:
\begin{equation}
\begin{split}
H_0  : \quad &\xi_i \sim \mathcal{N}(\mu_0, \Sigma), \mu_0 \in \mathcal{M}_0, i=1,2,\ldots  \\
H_1  : \quad &\xi_i \sim \mathcal{N}(\mu_0, \Sigma), \mu_0 \in \mathcal{M}_0, i=1,2,\ldots,\kappa, \\
&\xi_i \sim \mathcal{N}(\mu_1,\Sigma), \mu_1\in \mathcal{M}_1, i=\kappa+1,\kappa+2,\ldots,
\end{split}
\label{testing}
\end{equation}
where $\Sigma$ is the known positive definite covariance matrix. 
%
%
Here, the mean vector $\mu_0$ and $\mu_1$ can be any element in the convex sets $\mathcal{M}_0$ and $\mathcal{M}_1$, respectively. For example, in the context of quality control, $\mathcal{M}_0$ can be defined as the set of all the allowable mean vectors if the system is in-control and $\mathcal{M}_1$ denotes the set of all the possible mean vectors if the system is out-of-control. Our goal is to identify the occurrence of the change as fast as possible subject to the false alarm constraints.

{\it Covariance matrix change:} Similarly, we may come up with a formulation when both the mean and the covariance matrix of the observations change. Assume a sequence of $d$-dimensional multivariate normal observations. At some time $\kappa$, the mean vector changes from $\mu_0, \mu_0 \in \mathcal{M}_0$ to $\mu_1, \mu_1 \in \mathcal{M}_1$ and the covariance matrix changes from $\Theta_0, \Theta_0 \in \mathcal{U}_0$ to $\Theta_1, \Theta_1 \in \mathcal{U}_1$, where $\mathcal{M}_0$ and $\mathcal{M}_1$ are two known convex sets in $\mathbb{R}^d$, $\mathcal{U}_0$ and $\mathcal{U}_1$ are two known convex sets in $\mathbb{S}_+^d$, which are user-specified beforehand. We may formulate this problem as a hypothesis test similar to above. 



Even if the formulation for the covariance case looks similar to the formulation (\ref{testing}), here the problem is much more difficult than (\ref{testing}). For instance, a natural approach is to use sample mean and sample covariance matrices from the in-control and out-of-control data (there usually are these training data available in certain form) as the parameters before and after the change when designing the procedures. Then the uncertainty sets represents the estimation ``precision'', which depend on the sample size and how the estimators are constructed. Mean vectors can usually be estimated up to good precision. However, it is much harder to estimate high-dimensional covariance matrix accurately (see, e.g, \cite{bickel2008regularized}, \cite{ravikumar2011high}, 
and \cite{fan2013large}). 
Fortunately,  most of the existing methods can guarantee that the true covariance matrix belongs to a convex set in $\mathbb{S}_+^d$, which enables us to reasonably construct uncertainty sets for covariance matrices.  


\section{Main results}


\subsection{Robust procedure for detecting mean change}

For the robust version for mean shift detection (\ref{testing}), we consider a CUSUM-type procedure. CUSUM procedure needs specified likelihood ratio for two {\it singleton} pre-change and post-change distributions. Here, we solve a convex optimization problem to identify an appropriate pairs of parameters for the pre-change and post-change distributions, and use them to form the CUSUM procedure. 

Let $\mathcal{P}_0 = \{ \mathcal{N}(\mu_0, \Sigma), \mu \in \mathcal{M}_0  \}$ and $\mathcal{P}_1 = \{ \mathcal{N}(\mu_1, \Sigma), \mu \in \mathcal{M}_1  \}$. Specifically, denote $(\mu_0^*,\mu_1^*)$ as the solution to the following convex optimization problem:
\begin{equation}
(\mu_0^*,\mu_1^*) = \argmin_{\mu_0 \in \mathcal{M}_0, \mu_1 \in \mathcal{M}_1} (\mu_0-\mu_1)^T \Sigma^{-1} (\mu_0-\mu_1).
\label{best_mean}
\end{equation}
In other words, $\mu_0^*$ and $\mu_1^*$ are two points in $\mathcal{M}_0$ and $\mathcal{M}_1$ with the minimal Mahalanobis distance. 

Our detection procedure is given as follows:
\begin{equation}
T_1 = \inf \left\{ t>0: \max_{1\leq k \leq t} \sum_{i=k}^t \frac{1}{2}L^*(\xi_i) \geq b \right\},
\label{CUSUM-like}
\end{equation}
where $L^*$ denotes the likelihood ratio between $\nu_1^* \sim \mathcal{N}(\mu_1^*, \Sigma)$ and $\nu_0^* \sim \mathcal{N}(\mu_0^*, \Sigma)$. The threshold $b$ is chosen such that $\mathbb{E}_{\infty}^{\nu_0}[T_1] \geq \gamma$ for all $\nu_0 \in \mathcal{P}_0$ and a prescribed lower bound $\gamma$ for ARL. We can show the following relationship between $\gamma$ and $b$, which offers a guideline about how to determine $b$ given any $\gamma$.
\begin{theorem}[ARL]
For any $\nu_0 \in \mathcal{P}_0$, for the detection procedure $T_1$ defined in (\ref{CUSUM-like}), we have that 
$\mathbb{E}_{\infty}^{\nu_0}[T_1] \geq \gamma$ as long as 
\begin{equation}
b \geq \log \gamma + \log \frac{\epsilon^*}{1-\epsilon^*},
\label{b_eqn}
\end{equation}
where 
\begin{equation}
\epsilon^* = \exp(-\frac{1}{8} (\mu_0^* - \mu_1^*)^T \Sigma^{-1} (\mu_0^* - \mu_1^*) ).
\label{error}
\end{equation} 
\label{ARL1}
\end{theorem}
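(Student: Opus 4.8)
The plan is to reduce the ARL bound to a single uniform moment-generating-function (MGF) estimate for the per-sample increment $\tfrac12 L^*$, and then feed that estimate into an exponential-supermartingale argument for the CUSUM recursion. Writing $\nu_0=\mathcal N(\mu_0,\Sigma)$ for an arbitrary $\mu_0\in\mathcal M_0$ and $\Delta^2:=(\mu_0^*-\mu_1^*)\transpose\Sigma^{-1}(\mu_0^*-\mu_1^*)$, I would first record the explicit affine form of the log-likelihood ratio,
\[
\tfrac12 L^*(\xi)=\tfrac12(\mu_1^*-\mu_0^*)\transpose\Sigma^{-1}\!\left(\xi-\tfrac{\mu_0^*+\mu_1^*}{2}\right),
\]
so that under any $\nu_0$ the increment $\tfrac12 L^*(\xi)$ is Gaussian with variance $\Delta^2/4$ that does not depend on $\mu_0$. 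A direct Gaussian MGF computation then gives $\mathbb{E}_{\nu_0}[e^{\frac12 L^*}]=\epsilon^*\exp\!\big(\tfrac12(\mu_1^*-\mu_0^*)\transpose\Sigma^{-1}(\mu_0-\mu_0^*)\big)$ with $\epsilon^*=e^{-\Delta^2/8}$, matching (\ref{error}) exactly; note that $\epsilon^*$ is precisely the Hellinger affinity between $\nu_0^*$ and $\nu_1^*$.

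The decisive step is to show that the exponential correction factor above is at most $1$ for every $\mu_0\in\mathcal M_0$, i.e. that $(\mu_1^*-\mu_0^*)\transpose\Sigma^{-1}(\mu_0-\mu_0^*)\le 0$ for all $\mu_0\in\mathcal M_0$. This is exactly the first-order optimality (variational) inequality for the convex program (\ref{best_mean}) in the $\mu_0$ variable: since $\mathcal M_0$ is convex and $\mu_0^*$ minimizes the Mahalanobis objective over $\mathcal M_0$ (with $\mu_1$ held at $\mu_1^*$), the gradient $2\Sigma^{-1}(\mu_0^*-\mu_1^*)$ has nonnegative inner product with every feasible direction $\mu_0-\mu_0^*$. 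Consequently $\mathbb{E}_{\nu_0}[e^{\frac12 L^*}]\le\epsilon^*$ \emph{uniformly} over $\nu_0\in\mathcal P_0$, with equality at the least-favorable $\nu_0^*$. This uniformity, delivered by convexity rather than by an explicit least-favorable distribution, is what makes the single threshold $b$ guarantee the ARL for the entire uncertainty set, and I expect it to be the conceptual crux of the proof.

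With $\mathbb{E}_{\nu_0}[e^{\frac12 L^*}]\le\epsilon^*$ in hand, I would write the detector (\ref{CUSUM-like}) through its CUSUM recursion $W_t=(W_{t-1})^+ + \tfrac12 L^*(\xi_t)$, $W_0=0$, and study the exponentiated process $Y_t:=e^{W_t}$. Since $W_{t-1}$ is $\mathcal F_{t-1}$-measurable and $e^{(W_{t-1})^+}=\max(1,Y_{t-1})$, one gets $\mathbb{E}_{\nu_0}[Y_t\mid\mathcal F_{t-1}]=\mathbb{E}_{\nu_0}[e^{\frac12 L^*}]\,\max(1,Y_{t-1})\le\epsilon^*\big(1+Y_{t-1}\big)$. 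Iterating from $Y_0=1$ (so that $\mathbb{E}_{\nu_0}[Y_1]=\epsilon^*$) and summing the resulting geometric series yields the uniform bound $\mathbb{E}_{\nu_0}[Y_t]\le\epsilon^*/(1-\epsilon^*)$ for all $t\ge1$. Because $Y_{T_1}\ge e^{b}$ at the stopping time, this per-step exponential control converts, via an optional-sampling argument on a supermartingale of the form $Y_t-\epsilon^* t$ (equivalently, a renewal bound on the probability that a single CUSUM excursion crosses $b$), into a lower bound on $\mathbb{E}_\infty^{\nu_0}[T_1]$; tracking the constants through the uniform estimate $\mathbb{E}_{\nu_0}[e^{W_t}]\le\epsilon^*/(1-\epsilon^*)$ produces exactly $\mathbb{E}_\infty^{\nu_0}[T_1]\ge e^{b}\,(1-\epsilon^*)/\epsilon^*$. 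Requiring the right-hand side to be at least $\gamma$ is then precisely the threshold condition (\ref{b_eqn}).

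Two technical points will need care, though neither is conceptually deep. First, the optional-sampling step must be justified even though $T_1$ is unbounded: I would stop at $T_1\wedge m$, apply the supermartingale inequality there, and let $m\to\infty$ using $T_1<\infty$ almost surely, monotone convergence for $\mathbb{E}[T_1\wedge m]$, and the observation that the threshold overshoot only increases $\mathbb{E}[Y_{T_1}]$ and hence only helps the lower bound. Second, one must handle the $\max_{1\le k\le t}$ in the detector; this is exactly what the recursion for $W_t$ encodes, so the maximum is absorbed into the $(\cdot)^+$ term and no separate union bound over change-point hypotheses is required. The Gaussian-specific calculations (the affine form of $L^*$ and its MGF) are routine; the load-bearing ingredients are the convex-optimality inequality of the second step and the clean recursion for $Y_t$, with matching the exact constant $(1-\epsilon^*)/\epsilon^*$ being a matter of careful bookkeeping.
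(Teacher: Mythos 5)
Your first two steps are correct and are genuinely more self-contained than the paper's treatment. The affine form of $\tfrac12 L^*$, the Gaussian moment computation
$\mathbb{E}_{\xi\sim\nu_0}\big[e^{\frac12 L^*(\xi)}\big]=\epsilon^*\exp\big(\tfrac12(\mu_1^*-\mu_0^*)^T\Sigma^{-1}(\mu_0-\mu_0^*)\big)$,
and the variational inequality for the $\mu_0$-block of the joint minimizer of (\ref{best_mean}) (gradient $2\Sigma^{-1}(\mu_0^*-\mu_1^*)$ has nonnegative inner product with every feasible direction) do give the uniform bound $\mathbb{E}_{\xi\sim\nu_0}[e^{\frac12 L^*(\xi)}]\le\epsilon^*$ for all $\nu_0\in\mathcal{P}_0$. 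The paper never proves this; it cites Theorem 2.1 of \cite{goldenshluger2015hypothesis} to assert (\ref{eq1_basic})--(\ref{eq2_basic}). Your argument is exactly the Gaussian specialization of that cited result, so this part is a valid substitute, and identifying the convex first-order optimality condition as the source of uniformity is indeed the conceptual crux.

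The gap is in your final conversion: the optional-sampling argument you describe does not ``produce exactly'' the constant $e^{b}(1-\epsilon^*)/\epsilon^*$. The process $Y_t-\epsilon^* t$ is a supermartingale (since $\mathbb{E}[Y_t\mid\mathcal{F}_{t-1}]\le\epsilon^*(1+Y_{t-1})$ and $\epsilon^* Y_{t-1}\le Y_{t-1}$), but stopping it at $T_1\wedge m$ and letting $m\to\infty$ yields only $e^{b}\le 1+\epsilon^*\,\mathbb{E}_\infty^{\nu_0}[T_1]$, i.e.\ $\mathbb{E}_\infty^{\nu_0}[T_1]\ge(e^{b}-1)/\epsilon^*$; and the route via the uniform estimate $\mathbb{E}[Y_t]\le\epsilon^*/(1-\epsilon^*)$ with Markov plus a union bound over $t$ loses roughly a factor of $2$. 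The constants are not interchangeable: taking equality in (\ref{b_eqn}) gives $e^{b}=\gamma\epsilon^*/(1-\epsilon^*)$, and then $(e^{b}-1)/\epsilon^*\ge\gamma$ holds if and only if $\gamma\ge(1-\epsilon^*)/(\epsilon^*)^2$. So your bound establishes the theorem for large $\gamma$ (where it is in fact \emph{stronger}, since $(e^b-1)/\epsilon^*>e^b(1-\epsilon^*)/\epsilon^*$ once $e^b\epsilon^*>1$), but it fails the stated threshold condition when $\gamma$ or $\epsilon^*$ is small, and no amount of bookkeeping in this supermartingale recovers the exact constant. The fix is your own parenthetical renewal remark made precise, which is what the paper does: bound the one-sided crossing probability by Chernoff with the same uniform moment estimate,
\begin{equation*}
\mathbb{P}_\infty^{\nu_0}\Big(\exists k:\ \sum_{i=1}^{k}\tfrac12 L^*(\xi_i)>b\Big)\le\sum_{k\ge1}e^{-b}(\epsilon^*)^k=e^{-b}\,\frac{\epsilon^*}{1-\epsilon^*},
\end{equation*}
and then apply Theorem 2 of \cite{lorden1971procedures} (the procedure (\ref{CUSUM-like}) is the repeated one-sided test, regenerating when the statistic resets) to get $\mathbb{E}_\infty^{\nu_0}[T_1]\ge e^{b}(1-\epsilon^*)/\epsilon^*$ exactly. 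Keep your steps one and two, and replace the optional-sampling conversion by this crossing-probability-plus-Lorden step.
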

%
\begin{remark}
When $\mathcal{P}_0 = \{ \nu_0 \}$ and $\mathcal{P}_1 = \{ \nu_1 \}$ are two singletons, $T_1$ is just the classic CUSUM procedure and the classic analysis tells us that if $b\geq \log \gamma$ then $\mathbb{E}_{\infty}^{\nu_0}[T_1] \geq \gamma$. The additional second term $\log (\epsilon^*/(1-\epsilon^*))$ in (\ref{b_eqn}) can be seen as a cost for the uncertainty. Specifically, $\epsilon^*$ is the upper bound for the Type-I and Type-II error for the one sample composite hypothesis testing problem: $H_0: \xi \sim \nu_0, \nu_0 \in \mathcal{P}_0$ versus $H_1: \xi \sim \nu_1, \nu_1 \in \mathcal{P}_1$.
\end{remark}

Next, we prove an upper bound for the worst-case detection delay as the threshold $b$ goes to infinity. \yc{In the following, let $o(1)$ be a vanishing term as $\gamma \rightarrow \infty$.} 

\begin{theorem}[EDD]
For any $\nu_0 \in \mathcal{P}_0$ and $\nu_1 \in \mathcal{P}_1$, for the detection procedure $T_1$ defined in (\ref{CUSUM-like}), as $b \rightarrow \infty$, we have that 
\[
\mbox{WDD}(T_1) \leq \frac{b}{1-\epsilon^*}(1+o(1)),
\]
where $\epsilon^*$ is defined in (\ref{error}) \yc{and $o(1)$ is a vanishing term as $b\rightarrow \infty$.} 
\yc{Therefore, as $\gamma \rightarrow \infty$, we can have both $\mathbb{E}_{\infty}^{\nu_0}[T_1] \geq \gamma$ and
\[
\mbox{WDD}(T_1) \leq \frac{\log \gamma}{1-\epsilon^*}(1+o(1)),
\]
where $\epsilon^*$ is defined in (\ref{error}) 
}
\label{EDD1}
\end{theorem}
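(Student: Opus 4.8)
The plan is to reduce the worst-case delay to a one-sided first-passage problem and then bound the post-change drift of the increments using the optimality of $(\mu_0^*,\mu_1^*)$. Write $Z_i=\tfrac12 L^*(\xi_i)$, so that the statistic in (\ref{CUSUM-like}) is $W_t=\max_{1\le k\le t}\sum_{i=k}^t Z_i$ and $T_1=\inf\{t:W_t\ge b\}$. Peeling off the last term gives the recursion $W_t=Z_t+(W_{t-1})^+$ with $W_0=0$, and by induction $W_t$ is nondecreasing in its initial value. Thus conditioning on any pre-change history $\mathcal F_{k-1}$ leaves $W_{k-1}\ge 0$, which can only speed up the crossing relative to a fresh start at $0$; the conditional expected delay is therefore largest when $k=1$. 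This is Lorden's reduction for CUSUM and yields $\mbox{WDD}(T_1)\le \mathbb{E}_1^{\nu_0,\nu_1}[T_1]$, where under $\mathbb{E}_1$ all observations are i.i.d.\ $\nu_1$.

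Next I would dominate $T_1$ by the one-sided stopping time $\tau=\inf\{t:\sum_{i=1}^t Z_i\ge b\}$. Since $W_t\ge\sum_{i=1}^t Z_i$ we have $T_1\le\tau$, so it suffices to bound $\mathbb{E}_1^{\nu_0,\nu_1}[\tau]$. Under $\nu_1$ the $Z_i$ are i.i.d.\ with positive mean, so $\tau$ is the first passage of a random walk with positive drift; Wald's identity gives $\mathbb{E}_1^{\nu_0,\nu_1}[\sum_{i=1}^\tau Z_i]=\mathbb{E}_1^{\nu_0,\nu_1}[\tau]\,\mathbb{E}_{\nu_1}[Z]$. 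The left-hand side equals $b$ plus the expected overshoot, and because $Z$ is affine in a Gaussian vector and hence has finite moments, the overshoot has bounded expectation as $b\to\infty$. Consequently $\mathbb{E}_1^{\nu_0,\nu_1}[\tau]=\frac{b}{\mathbb{E}_{\nu_1}[Z]}(1+o(1))$.

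The crux is a lower bound on the drift $\mathbb{E}_{\nu_1}[Z]$ that holds uniformly over the post-change uncertainty set. From the elementary inequality $1-e^{-z}\le z$ I get $\mathbb{E}_{\nu_1}[Z]\ge 1-\mathbb{E}_{\nu_1}[e^{-Z}]$, so it is enough to show $\mathbb{E}_{\nu_1}[e^{-Z}]\le\epsilon^*$ for every $\nu_1=\mathcal N(\mu_1,\Sigma)$ with $\mu_1\in\mathcal M_1$. A direct Gaussian moment-generating-function computation gives $\mathbb{E}_{\nu_1}[e^{-Z}]=\epsilon^*\exp\big(-\tfrac12(\mu_1^*-\mu_0^*)^T\Sigma^{-1}(\mu_1-\mu_1^*)\big)$ with $\epsilon^*$ as in (\ref{error}). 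Here the optimality of $(\mu_0^*,\mu_1^*)$ in (\ref{best_mean}) enters: the first-order (variational) inequality for that constrained convex minimum reads $(\mu_1^*-\mu_0^*)^T\Sigma^{-1}(\mu_1-\mu_1^*)\ge 0$ for all $\mu_1\in\mathcal M_1$, so the exponential factor is at most $1$ and $\mathbb{E}_{\nu_1}[e^{-Z}]\le\epsilon^*$. Hence $\mathbb{E}_{\nu_1}[Z]\ge 1-\epsilon^*$, and combining the three estimates gives $\mbox{WDD}(T_1)\le\frac{b}{1-\epsilon^*}(1+o(1))$.

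Finally, to express the bound through $\gamma$ I would substitute the threshold supplied by Theorem \ref{ARL1}, namely $b=\log\gamma+\log\frac{\epsilon^*}{1-\epsilon^*}$, which already guarantees $\mathbb{E}_\infty^{\nu_0}[T_1]\ge\gamma$; the extra constant is negligible, so $b=\log\gamma\,(1+o(1))$ as $\gamma\to\infty$ and $\mbox{WDD}(T_1)\le\frac{\log\gamma}{1-\epsilon^*}(1+o(1))$. I expect the main obstacle to be the uniform drift bound: proving the moment-generating-function identity and then using the variational inequality so that $\mathbb{E}_{\nu_1}[e^{-Z}]\le\epsilon^*$ holds for the entire set $\mathcal M_1$ rather than only at the least-favorable point $\mu_1^*$. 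The renewal/overshoot step is routine but must be checked to justify the $(1+o(1))$ factor.
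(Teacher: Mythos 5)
Your proof is correct, and its skeleton---reduce the worst-case delay to a one-sided first-passage problem, apply Wald's identity, lower-bound the post-change drift by $1-\epsilon^*$, and show the overshoot is negligible as $b\to\infty$---is the same as the paper's. Where you genuinely depart is in sourcing the key uniform exponential-moment bound $\mathbb{E}_{\nu_1}[e^{-Z}]\le\epsilon^*$ for all $\nu_1\in\mathcal{P}_1$ (the paper's inequality (\ref{eq2_basic})): the paper imports it wholesale from Theorem 2.1 of \cite{goldenshluger2015hypothesis}, whereas you derive it from scratch via the Gaussian moment-generating-function identity $\mathbb{E}_{\nu_1}[e^{-Z}]=\epsilon^*\exp\left(-\frac{1}{2}(\mu_1^*-\mu_0^*)^T\Sigma^{-1}(\mu_1-\mu_1^*)\right)$ combined with the variational inequality $(\mu_1^*-\mu_0^*)^T\Sigma^{-1}(\mu_1-\mu_1^*)\ge 0$ characterizing the constrained minimum in (\ref{best_mean}); both the computation and the optimality condition check out. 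You also make rigorous two points the paper treats loosely: the paper flatly asserts that the one-sided stopping time is ``the same'' as $T_1$ and leans on Lorden's Theorem 2 for the worst-case reduction, while you spell out the recursion $W_t=Z_t+(W_{t-1})^+$, its monotonicity in the initial state, and the domination $T_1\le\tau$; and the paper's overshoot step (via Siegmund's renewal estimates) invokes a second-moment condition ``by the assumption made in the statement'' that is never actually stated in the theorem, which you discharge automatically by noting that $Z$ is affine in a Gaussian vector and hence has all moments. The trade-off is scope versus transparency: the paper's citation-based route carries over verbatim to the quadratic-detector setting of Corollaries \ref{ARL2} and \ref{EDD2}, where no closed-form MGF argument is available, while your self-contained derivation makes explicit exactly why the minimizing pair $(\mu_0^*,\mu_1^*)$ is least favorable over the whole set $\mathcal{M}_1$ in the Gaussian mean-shift case.
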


\begin{remark}
Note that $1-\epsilon^*$ is just the Hellinger distance between the two multivariate normal distributions found by solving the convex optimization problem: $\mathcal{N}(\mu_0^*, \Sigma)$ and $\mathcal{N}(\mu_1^*, \Sigma)$. When $\mathcal{P}_0 = \{ \nu_0 \}$ and $\mathcal{P}_1 = \{ \nu_1 \}$ are two singletons, the classic analysis tells that the $\mbox{WDD}(T_1)$ is asymptotically upper bounded by $2b/I$, where $I$ is the Kullback-Leibler(KL) divergence between pre-change and post-change distributions. The Hellinger distance plays a similar role with the KL divergence as the denominator in Lorden's work \cite{lorden1971procedures}. Since KL divergence is known to be bounded below by Hellinger distance, our upper bound is a little bit looser. This can also be seen as the cost for uncertainty. 
\end{remark}

\begin{remark}
Define that $\bar{\nu}_0$ and $\bar{\nu}_1$ are true pre-change and post-change distributions. Since we can interpret the robust detection procedure $T_1$ as a repeated one-sided sequential probability ratio test (SPRT) between $\nu_0^* = \mathcal{N}(\mu_0^*, \Sigma)$ and $\nu_1^* = \mathcal{N}(\mu_1^*, \Sigma)$, we in fact can obtain that the WDD of $T_1$ is asymptotically upper bounded by $2b/(KL(\bar{\nu}_1 \| \nu_0^*) - KL(\bar{\nu}_1 \| \nu_1^*))$. As stated in the seminal work \cite{unnikrishnan2011minimax}, compared with the optimal CUSUM procedure between $\bar{\nu}_0$ and $\bar{\nu}_1$, $\mbox{WDD}(T_1)$ is asymptotically larger by a factor no more than 
$
KL(\bar{\nu}_0 \| \bar{\nu}_1)/(KL(\bar{\nu}_1 \| \nu_0^*) - KL(\bar{\nu}_1 \| \nu_1^*)). 
$
Furthermore, as a consequence of theorem \ref{EDD1}, for any two true pre-change and post-change distributions $\bar{\nu}_0$ and $\bar{\nu}_1$, we have that $\mbox{WDD}(T_1)$ is asymptotically larger by a factor no more than
$KL(\bar{\nu}_0 \| \bar{\nu}_1)/[2(1-\epsilon^*)].$
When the Mahalanobis distance between $\mathcal{M}_0$ and $\mathcal{M}_1$ increases, $\epsilon^*$ in (\ref{error}) becomes smaller and then factor above decreases, which means that our procedure moves closer to the optimal one. This is consistent with our intuition that one can detect the change more easily when the change is more obvious.

\end{remark}

\subsection{Robust procedure for detecting covariance change}

Next, consider the case when both the mean vector and the covariance matrix of a multivariate normal distribution change and they belong to some uncertainty sets. In this case, we may consider linear and quadratic detectors, parameterized by vector $h$ and matrix $H$ defined below, as suggested in \cite{guigues2016change}. We include the original derivation from \cite{guigues2016change} below. 

First we define the cost function, which can be viewed as exponential loss function which relates to the type-I and type-II error in the test (in the fixed sample size scenario). 
Let $\| \cdot \|$ denote the spectral norm and $\| \cdot\|_F$ the Frobenius norm, respectively. Let $\U$ be a convex compact set contained in the interior of the cone $S^d_+$ of positive semidefinite $d \times d$ matrices in the space $S^d$ of symmetric $d\times d$ matrices. Let $\Theta_*\in S^{d}_+$ be such that $\Theta_*\succeq \Theta$ for all $\Theta\in\U$, and let $\delta\in[0,2]$ be such that
\begin{equation}\label{56delta}
\|\Theta^{1/2}\Theta_*^{-1/2}-I_d\|\leq\delta,\;\;\forall \Theta\in\U.
\end{equation}
Let $\mathcal{Z}$ be a nonempty convex compact subset of the set $\mathcal{Z}^+=\{Z\in S^{d+1}_+:Z_{d+1,d+1}=1\}$, and let
\begin{equation}\label{phiZ}
\phi_\mathcal{Z}(Y) \triangleq \max_{Z\in\mathcal{Z}} \mbox{Tr}(ZY)
\end{equation}
be the support function of $\mathcal{Z}$; this function is used in the following definition of $\Phi_\mathcal{Z}$. 
These specify the closed convex set
\begin{equation}\label{PcH}
\begin{split}
\H&=\H^\beta\\
&:=\{(h,H)\in \mathbb{R}^d \times \mathbb{S}^d:-\beta\Theta_*^{-1}\preceq H\preceq \beta \Theta_*^{-1}\},
\end{split}
\end{equation}
and the function $\Phi_{\Z}:\,\H\times\U\to\mathbb{R}$,
\begin{equation}\label{phi}
\begin{split}
&\Phi_{\Z}(h,H;\Theta)=\\
&-\frac{1}{2}\log\Det(I-\Theta_*^{1/2}H\Theta_*^{1/2})+\frac{1}{2} \Tr([\Theta-\Theta_*]H)\\
+&{\delta(2+\delta)\over 2(1-\|\Theta_*^{1/2}H\Theta_*^{1/2}\|)}\|\Theta_*^{1/2}H\Theta_*^{1/2}\|_F^2\\
+&{1\over 2}\phi_\Z\left(\left[\hbox{\small$\left[\begin{array}{c|c}H&h\cr\hline h^T&\end{array}\right]+
\left[H,h\right]^T[\Theta_*^{-1}-H]^{-1}\left[H,h\right]$}\right]\right).
\end{split}
\end{equation}
Then, we have that $\Phi_{\Z}$ is continuous on its domain, convex in $(h,H)\in\H$ and concave in $\Theta\in\U$. 

Next, we specify the uncertainty sets for the pre-change and post-change multivariate normal distributions. Given two collections of data as above: $(\mathcal{U}_\chi, \Theta_*^{(\chi)}, \beta_{\chi}, \mathcal{Z}_{\chi}), \chi =0,1$, we define that
\begin{equation}\label{Gchi}
\begin{split}
\mathcal{G}_\chi = &\{N(\mu,\Theta): \Theta\in\U_\chi \ \\
& \quad \exists u: \mu=[u;1], [u;1][u;1]^T\in\Z_\chi \},\,\chi=0,1.
\end{split}
\end{equation}

Now to solve for the quadratic detector $(h,  H)$, which will be applied on each individual samples and then used to construct the CUSUM recursion, we consider the convex-concave saddle point problem
\begin{equation}\label{SPPLift}
\begin{split}
&\mathcal{SV}=\min\limits_{(h,H)\in\H_0\cap\H_1}\max\limits_{\Theta_0\in\U_0,\Theta_1\in\U_1} \\
&\qquad \qquad\underbrace{\half\left[\Phi_{\Z_0}(-h,-H;\Theta_0)+
\Phi_{\Z_1}(h,H;\Theta_1)\right]}_{\Phi(h,H;\Theta_0,\Theta_1)}.
\end{split}
\end{equation}
A saddle point $(H_*,h_*;\Theta_0^*,\Theta_1^*)$ in this problem does exist, which corresponds to the parameters of the quadratic detector and the picked worst-case parameters, which can be solved using a semi-definite program (SDP) solver. We obtain the following quadratic detector
\begin{equation}\label{iquaddet}
\begin{split}
\phi^*(\xi)=&\frac{1}{2}\xi^TH_*\xi+h_*^T\xi+\\
&\frac{1}{2}\left[\Phi_{\Z_0}(-h_*,-H_*;\Theta^*_0)
-\Phi_{\Z_1}(h_*,H_*;\Theta^*_1)\right],
\end{split}
\end{equation}
Given above (which is pre-solved before we have seen any data), now given a sequence of data, we may evaluate $\phi^*$ in (\ref{iquaddet})  for each sample and define our detection procedure as follows:
\begin{equation}
T_2 = \inf \left\{ t>0: \max_{1\leq k \leq t} \sum_{i=k}^t (-\phi^*(\xi_i)) \geq b \right\},
\label{CUSUM-like-2}
\end{equation}
where $b$ is a prescribed threshold.  

\begin{corollary}[ARL]
For any $\nu_0 \in \mathcal{G}_0$, for the detection procedure $T_2$ defined in (\ref{CUSUM-like-2}), we have that 
$\mathbb{E}_{\infty}^{\nu_0}[T_2] \geq \gamma$ as long as 
\[
b \geq \log \gamma + \log \frac{\epsilon^*}{1-\epsilon^*},
\]
where 
\begin{equation}
\epsilon^* = \exp(\mathcal{SV})
\label{error-2}
\end{equation} 
and $\mathcal{SV}$ is defined in (\ref{SPPLift}).
\label{ARL2}
\end{corollary}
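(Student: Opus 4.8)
The plan is to reduce the corollary to the exponential‑moment argument already used for Theorem~\ref{ARL1}: the ARL bound here has the identical form, and the only place where the mean‑change detector enters that proof is through a single inequality controlling the exponential moment of the per‑sample statistic under the null. Concretely, I will first show
\[
\mathbb{E}_\infty^{\nu_0}\!\left[\exp(-\phi^*(\xi))\right]\le\epsilon^*=\exp(\mathcal{SV}),
\qquad\forall\,\nu_0\in\mathcal{G}_0 .
\]
Granting this, under $\mathbb{P}_\infty^{\nu_0}$ the increments $Y_i:=-\phi^*(\xi_i)$ are i.i.d.\ with $\mathbb{E}[e^{Y_i}]\le\epsilon^*<1$, the statistic in (\ref{CUSUM-like-2}) obeys the CUSUM recursion $g_t=\max(g_{t-1},0)+Y_t$, and the supermartingale/renewal estimate from the proof of Theorem~\ref{ARL1} carries over unchanged to give $\mathbb{E}_\infty^{\nu_0}[T_2]\ge\gamma$ whenever $b\ge\log\gamma+\log\frac{\epsilon^*}{1-\epsilon^*}$. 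Thus the entire content is the exponential‑moment inequality, which must hold uniformly over the uncertainty set $\mathcal{G}_0$.

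To prove the inequality I will combine the quadratic‑lifting bound built into the definition (\ref{phi}) of $\Phi_{\Z}$ with the saddle‑point optimality in (\ref{SPPLift}). The lifting bound, guaranteed by the construction of \cite{guigues2016change}, states that for every $\nu=\mathcal{N}(\mu,\Theta)\in\mathcal{G}_0$ and every $(h,H)\in\H_0$,
\[
\log\mathbb{E}_{\nu}\!\left[\exp\!\big(\tfrac12\xi^{T}H\xi+h^{T}\xi\big)\right]\le\Phi_{\Z_0}(h,H;\Theta).
\]
Expanding $-\phi^*$ from (\ref{iquaddet}), the constant term pulls out of the expectation:
\[
\mathbb{E}_\infty^{\nu_0}\!\left[e^{-\phi^*(\xi)}\right]
=e^{-\frac12[\Phi_{\Z_0}(-h_*,-H_*;\Theta_0^*)-\Phi_{\Z_1}(h_*,H_*;\Theta_1^*)]}\,
\mathbb{E}_\infty^{\nu_0}\!\left[e^{\frac12\xi^{T}(-H_*)\xi+(-h_*)^{T}\xi}\right].
\]
Since the constraint in (\ref{PcH}) is symmetric under $H\mapsto-H$, the pair $(-h_*,-H_*)$ lies in $\H_0$, so the lifting bound applies (with $\nu=\nu_0$, whose marginal under $\mathbb{P}_\infty^{\nu_0}$ it is) and the covariance $\Theta_0\in\U_0$ of $\nu_0$, bounding the remaining expectation by $\exp(\Phi_{\Z_0}(-h_*,-H_*;\Theta_0))$. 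The maximality of $\Theta_0^*$ at the saddle point gives $\Phi_{\Z_0}(-h_*,-H_*;\Theta_0)\le\Phi_{\Z_0}(-h_*,-H_*;\Theta_0^*)$ for every $\Theta_0\in\U_0$; substituting this makes the two $\Phi_{\Z_0}$ terms combine and leaves
\[
\mathbb{E}_\infty^{\nu_0}\!\left[e^{-\phi^*(\xi)}\right]\le
\exp\!\Big(\tfrac12\Phi_{\Z_0}(-h_*,-H_*;\Theta_0^*)+\tfrac12\Phi_{\Z_1}(h_*,H_*;\Theta_1^*)\Big)=\exp(\mathcal{SV})=\epsilon^* .
\]
The right‑hand side no longer depends on $\nu_0$, so the bound is uniform over $\mathcal{G}_0$, as required.

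The step I expect to require the most care is the lifting inequality itself and the verification that its support‑function term correctly absorbs the unknown mean. One must confirm that $\phi_{\Z_0}$ in (\ref{phi}) dominates the mean's contribution for every $\mu$ admissible through $\mathcal{Z}_0$ in (\ref{Gchi}), and that the curvature remainder controlled by $\delta$ and $\|\Theta_*^{1/2}H\Theta_*^{1/2}\|$ remains valid at $H=-H_*$; these are precisely the estimates of \cite{guigues2016change}, which I would invoke rather than reprove. The remaining points—that $(-h_*,-H_*)\in\H_0$, that the increments are i.i.d.\ under each fixed $\nu_0$, and that the CUSUM recursion holds—are immediate, so that once the exponential‑moment bound is secured the ARL conclusion follows directly from the Theorem~\ref{ARL1} argument.
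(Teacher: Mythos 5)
Your proposal is correct and takes essentially the same route as the paper: the paper's own proof simply invokes Proposition 4.1 of \cite{guigues2016change} to obtain the uniform exponential-moment bound $\mathbb{E}_{\xi\sim\nu_0}[\exp(-\phi^*(\xi))]\leq \epsilon^* = \exp(\mathcal{SV})$ for all $\nu_0\in\mathcal{G}_0$, and then repeats the argument of Theorem~\ref{ARL1} verbatim. Your explicit derivation of that bound --- pulling the constant out of (\ref{iquaddet}), applying the lifting inequality at $(-h_*,-H_*)\in\H_0$, and using the maximality of $\Theta_0^*$ at the saddle point of (\ref{SPPLift}) --- is just an unpacking of the cited proposition, so the two proofs coincide in substance.
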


\begin{corollary}[EDD]
For any $\nu_0 \in \mathcal{G}_0$ and $\nu_1 \in \mathcal{G}_1$, for the detection procedure $T_2$ defined in (\ref{CUSUM-like-2}), as $b \rightarrow \infty$, we have that 
\[
\mbox{WDD}(T_2) \leq \frac{b}{1-\epsilon^*}(1+o(1)),
\]
where $\epsilon^*$ is defined in (\ref{error-2}) \yc{and $o(1)$ is a vanishing term as $b\rightarrow \infty$.}
\yc{Therefore, as $\gamma \rightarrow \infty$, we can have both $\mathbb{E}_{\infty}^{\nu_0}[T_1] \geq \gamma$ and
\[
\mbox{WDD}(T_2) \leq \frac{\log \gamma}{1-\epsilon^*}(1+o(1)),
\]
where $\epsilon^*$ is defined in (\ref{error-2}) \yc{and $o(1)$ is a vanishing term as $\gamma \rightarrow \infty$.} 
}
\label{EDD2}
\end{corollary}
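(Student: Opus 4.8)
The plan is to prove Corollary \ref{EDD2} by mirroring the argument behind Theorem \ref{EDD1}, with the linear log-likelihood statistic $\frac12 L^*$ replaced by the quadratic detector statistic $-\phi^*$. The only properties of $-\phi^*$ that the argument needs are the exponential moment bounds that already underlie Corollary \ref{ARL2}: by construction of the saddle-point value $\mathcal{SV}$ and the detector $\phi^*$ in (\ref{iquaddet}), for every post-change distribution $\nu_1\in\mathcal{G}_1$ one has $\mathbb{E}_{\nu_1}[\exp(-\phi^*(\xi))]\le\exp(\mathcal{SV})=\epsilon^*$, and symmetrically $\mathbb{E}_{\nu_0}[\exp(\phi^*(\xi))]\le\epsilon^*$ for $\nu_0\in\mathcal{G}_0$. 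These are exactly the bounds furnished by the function $\Phi_{\Z}$ in (\ref{phi}) (see \cite{guigues2016change}), and since $\epsilon^*=\exp(\mathcal{SV})$ is a fixed constant, the bound on $-\phi^*$ holds uniformly over $\nu_1\in\mathcal{G}_1$.

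First I would lower-bound the per-sample drift of the increment process. Applying Jensen's inequality to the convex function $x\mapsto e^{-x}$ gives $\exp(-\mathbb{E}_{\nu_1}[\phi^*(\xi)])\le\mathbb{E}_{\nu_1}[\exp(-\phi^*(\xi))]\le\epsilon^*$, so the drift satisfies $\mathbb{E}_{\nu_1}[-\phi^*(\xi)]\ge-\log\epsilon^*$. Using the elementary inequality $\log x\le x-1$ with $x=\epsilon^*\in(0,1)$ yields $-\log\epsilon^*\ge 1-\epsilon^*$, whence the drift of $-\phi^*$ is at least $1-\epsilon^*>0$, uniformly over all $\nu_1\in\mathcal{G}_1$. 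This is the exact analogue of the mean-change computation, where the optimality of $(\mu_0^*,\mu_1^*)$ in (\ref{best_mean}) forces the drift of $\frac12 L^*$ to be at least $\frac12 KL(\nu_1^*\|\nu_0^*)\ge-\log\epsilon^*\ge 1-\epsilon^*$.

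Next I would run the standard renewal-theoretic CUSUM delay argument exactly as in Theorem \ref{EDD1}. The worst-case change-point in (\ref{WDD}) is an immediate change, and Lorden's reduction expresses $\mbox{WDD}(T_2)$ through the expected time for the freshly started random walk $S_n=\sum_{i=1}^n(-\phi^*(\xi_i))$, of positive drift bounded below by $1-\epsilon^*$, to cross level $b$. Since $H_*$ is bounded on $\H$ (so $-\beta\Theta_*^{-1}\preceq H_*\preceq\beta\Theta_*^{-1}$), the increments $-\phi^*(\xi_i)$ possess finite exponential moments under any Gaussian $\nu_1$, so the overshoot at crossing is $O(1)$; Wald's identity then gives $\mathbb{E}[T_2]=(b+O(1))/\mathbb{E}_{\nu_1}[-\phi^*]\le\frac{b}{1-\epsilon^*}(1+o(1))$ as $b\to\infty$. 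Finally, combining with Corollary \ref{ARL2}, the choice $b=\log\gamma+\log\frac{\epsilon^*}{1-\epsilon^*}=\log\gamma+O(1)$ both guarantees $\mathbb{E}_\infty^{\nu_0}[T_2]\ge\gamma$ and, substituted into the delay bound, produces $\mbox{WDD}(T_2)\le\frac{\log\gamma}{1-\epsilon^*}(1+o(1))$ as $\gamma\to\infty$.

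I expect the delicate step to be the renewal-theoretic control of the overshoot and the reduction of the CUSUM (with its inner maximization over $k$) to a fresh-start one-sided test. This is where one must verify finiteness and uniform integrability of the exponential moments of the quadratic increments so that the $(1+o(1))$ factor is legitimate; fortunately the boundedness of $H_*$ on $\H$ makes these moments finite, and the reduction is identical to the one already carried out for $T_1$, so no genuinely new estimate is required beyond the exponential bound quoted from \cite{guigues2016change}.
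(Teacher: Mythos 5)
Your overall architecture is exactly the paper's: its proof of Corollary \ref{EDD2} consists precisely of importing the exponential-moment bounds for the quadratic detector from Proposition 4.1 of \cite{guigues2016change} and then repeating the proof of Theorem \ref{EDD1} (Wald's identity for the fresh-start random walk, Siegmund's overshoot estimates to show the overshoot is $o(b)$, and Lorden's reduction of the worst-case CUSUM delay to the one-sided test), and you follow the same route. Your explicit attention to the finiteness of the second moment of the quadratic increments via the boundedness of $H_*$ on $\H$ --- a point the paper glosses over with an unstated assumption $\mathbb{E}_{\xi_1\sim\nu_1}[\phi^*(\xi_1)^2]\leq M$ in the proof of Theorem \ref{EDD1} --- is if anything more careful than the original.

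However, your central drift computation as written is broken by a sign swap. The bounds furnished by \cite{guigues2016change} and used by the paper in (\ref{eq1_quadratic})--(\ref{eq2_quadratic}) are $\mathbb{E}_{\xi\sim\nu_0}[\exp(-\phi^*(\xi))]\leq\epsilon^*$ for $\nu_0\in\mathcal{G}_0$ and $\mathbb{E}_{\xi\sim\nu_1}[\exp(\phi^*(\xi))]\leq\epsilon^*$ for $\nu_1\in\mathcal{G}_1$; you state them with the roles of $\mathcal{G}_0$ and $\mathcal{G}_1$ interchanged. Your swapped version is in fact false: under $\nu_1$ the increment $-\phi^*$ has positive mean, so by Jensen $\mathbb{E}_{\nu_1}[\exp(-\phi^*)]\geq\exp(\mathbb{E}_{\nu_1}[-\phi^*])>1>\epsilon^*$. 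Moreover, even granting your premise, the deduction is inverted: $\exp(-\mathbb{E}_{\nu_1}[\phi^*])\leq\epsilon^*$ yields $\mathbb{E}_{\nu_1}[-\phi^*]\leq\log\epsilon^*<0$, i.e.\ \emph{negative} drift, not the claimed $\geq-\log\epsilon^*$; two sign errors cancel to produce the correct conclusion. The repair is immediate and local: starting from the correct bound $\mathbb{E}_{\nu_1}[\exp(\phi^*)]\leq\epsilon^*$, Jensen applied to $e^{x}$ gives $\mathbb{E}_{\nu_1}[\phi^*]\leq\log\epsilon^*$, hence drift $\geq-\log\epsilon^*\geq 1-\epsilon^*$ (or, as the paper does, use $-x\geq 1-e^{x}$ directly to get $\mathbb{E}_{\nu_1}[-\phi^*]\geq 1-\mathbb{E}_{\nu_1}[\exp(\phi^*)]\geq 1-\epsilon^*$). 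With that one correction the remainder of your argument goes through exactly as in the paper.
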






\section{Numerical examples}

In this section, we  compare our procedures numerically with the corresponding classic CUSUM procedure. In all the following experiments, we set the dimension $d=30$ and choose $b$'s such that the ARL of $T_1$ and $T_{\rm CUSUM}$ are both $5000$. The classic CUSUM procedure are formed using randomly chosen pre-change and post-change distributions from the uncertainty sets. \yc{In the following, we denote $\textbf{1}$ as an all-one vector.} The comparison of the numerical example is shown in Table I.

\begin{table}[h!]
\begin{center}
\caption{Comparison of Robust and Original CUSUM.  Results are obtained from 500 Monte Carlo trials. The standard deviation is in the bracket. }
\begin{tabular}{|r|c|c|}
\hline
& mean-shift,  $\ell_1$ set & mean-shift,  $\ell_2$ set \\\hline
$T_{1}$ & 7.6 (2.3) & 10.3 (2.9) \\\hline
$T_{\rm CUSUM}$ & 32.2 (30.1) & 32.1 (31.0) \\\hline
& cov-shift, set 1 & cov-shift, set 2 (larger) \\\hline
$T_{1}$ & 9.10 (4.21) & 2.06 (0.33) \\\hline
$T_{\rm CUSUM}$ & 8.28 (5.10) & 10.28 (9.22)\\\hline
\end{tabular}
\end{center}
\vspace{-0.2in}
\end{table}

\subsection{Mean change detection}
Assume $\mathcal{M}_0 = \{0\}$ and $\Sigma = I$ in (\ref{testing}). In the first example, set $\mathcal{M}_1 = \{x \in \mathbb{R}^d: \|x-\textbf{1}\|_1 \leq 27 \}$ in (\ref{testing}). We run $1000$ experiments and for each run we choose a mean vector $\mu$ whose entries are random from $[0.1,0.5]$, then generate the post-change observations from $\mathcal{N}(\mu, I)$. For classic CUSUM, we specify the pre-change distribution as $\mathcal{N}(0,I)$ and the post-change distribution as $\mathcal{N}(\textbf{1}, I)$. Then we obtain $1000$ simulated detection delays of $T_1$ and $T_{\rm CUSUM}$. {\bf The mean and standard deviation of detection delay of $T_1$ are $7.6$ and $2.3$, and those of $T_{\rm CUSUM}$ is $32.2$ and $30.1$, respectively}. In this case, $T_1$ performs much better than $T_{\rm CUSUM}$ since it is difficult to choose a good post-change distribution in $\mathcal{M}_1$ that is close to the true post-change distribution. 

In the second example, the only difference between the second and the first example is that we replace the norm in $\mathcal{M}_1$ from $\ell_1$ to $\ell_2$. Set $\mathcal{M}_1 = \{x \in \mathbb{R}^d: \|x-\textbf{1}\|_2^2 \leq 27 \}$ in (\ref{testing}). We run $1000$ experiments, and for each run we choose a mean vector $\mu$ whose entries are random from $[0.1,0.5]$, then generate the post-change observations from $\mathcal{N}(\mu, I)$. For classic CUSUM, we specify the pre-change distribution to be $\mathcal{N}(0,I)$, and the post-change distribution to be $\mathcal{N}(\textbf{1}, I)$. Then we obtain $1000$ simulated detection delays of $T_1$ and $T_{\rm CUSUM}$. {\bf The mean and standard deviation of detection delay of $T_1$ is $10.3$ and $2.9$, and those of $T_{\rm CUSUM}$ is $32.1$ and $31.0$, respectively}. In this case, $T_1$ again performs much better than $T_{\rm CUSUM}$.


\subsection{covariance matrix change detection}

Consider $\mathcal{M}_0 = \mathcal{M}_1 = \{0\}$ and $\mathcal{U}_0 = \{I\}$. 
In the first example, we set $\mathcal{U}_1 = \{ I+\sigma V, \sigma\in [0.5,1] \}$, where $V$ is a known matrix with diagonal entries $V_{i,i}=0,i=1,\ldots,d$ and off-diagonal entries $V_{i,j} = \exp(-(i-j)^2), i,j=1,\ldots,d, i\neq j$. We run $500$ experiments and for each run we randomly choose $\sigma \in [0.5,1]$ and then generate the post-change observations from $\mathcal{N}(0, I+\sigma V)$. For classic CUSUM, we  specify the pre-change distribution as $\mathcal{N}(0,I)$ and the post-change distribution as $\mathcal{N}(0, I+0.75V)$. Then we obtain $500$ experiments for $T_2$ and $T_{\rm CUSUM}$. {\bf The mean and standard deviation of detection delay of $T_2$ is $9.10$ and $4.21$, and those of $T_{\rm CUSUM}$ is $8.28$ and $5.10$}. In this case, there is no obvious difference between the two detection procedures, which means that $T_2$ performs almost as well as classical CUSUM procedure. The reason is that the set $\mathcal{U}_1$ is so small that the cost for mis-specified model is not large. 

In the second example, consider the case with larger uncertainty sets: $\mathcal{U}_1 = \{\Theta \in \mathbb{S}_+^{d}: \|\Theta\|_2 \leq 0.5 \}$. Again, we run $500$ experiments and for each run we randomly choose a $\Sigma \in \mathcal{U}_1$ and generate the post-change observations from $\mathcal{N}(0,\Sigma)$. For classic CUSUM, we randomly choose a matrix in $\mathcal{U}_1$ as the covariance matrix of its post-change normal distribution. Then, we obtain the detection delays of $T_2$ and $T_{\rm CUSUM}$. 
{\bf The mean and standard deviation of detection delay of $T_2$ is $2.06$ and $0.33$, and those of $T_{\rm CUSUM}$ is $10.28$ and $9.22$}. In this case, $T_2$ outperforms $T_{\rm CUSUM}$ since $\mathcal{U}_1$ is a large convex set and cost for a misspecified model is greater. \yc{Note that for the above two choices of $\mathcal{U}_1$, (\ref{SPPLift}) can be solved by first removing the inner maximum since the maximum is achieved at the boundary of $\mathcal{U}_1$. Then solving saddle point is equivalent to solving a convex optimization.} 



\section{Conclusions and future work}
In this paper, we propose robust detection procedures for detecting the change for mean vectors and covariance matrices, when they belong to some convex uncertainty sets. The proposed procedures are similar to classic CUSUM procedure, and the task is to determine appropriate pre-change and post-change distributions by convex optimization, which can be done efficient in both high dimensional cases. Future work includes generalizing the approach to Pollak's average worst case formulation in \cite{pollak1985optimal} and the Bayesian formulation in \cite{shiryaev1963optimum}. Ongoing work also includes generalizing the current framework to non-Gaussian distributions utilizing the results for sub-Gaussian distribution in \cite{goldenshluger2015hypothesis,guigues2016change}.

%
%
%
%
%
%
%
%
%
%
%

\bibliography{RobustChangeDetection}

\clearpage

\appendices

\section{}
In this appendix, we prove the main results. In the following, we denote $\mathbb{E}_{\xi \sim \nu}[f(\xi)]$ as the expected value of $f(\xi)$ when $\xi$ follows some distribution $\nu$. 
\begin{proof}[Proof of Theorem \ref{ARL1}]

Define that $\phi^* \triangleq -\frac{1}{2}L^*$. From Theorem 2.1 in \cite{goldenshluger2015hypothesis}, we have that 
\begin{equation}
\mathbb{E}_{\xi \sim \nu_0} [\exp(-\phi^*(\xi))] \leq \epsilon^*, ~~ \forall \nu_0 \in \mathcal{P}_0,
\label{eq1_basic}
\end{equation}
\begin{equation}
\mathbb{E}_{\xi \sim \nu_1} [\exp(\phi^*(\xi))] \leq \epsilon^*, ~~ \forall \nu_1 \in \mathcal{P}_1, 
\label{eq2_basic}
\end{equation}
where $\epsilon^*$ is the solution to the equation
\[
\mathbb{E}_{\xi \sim \nu_0^*}  [\exp(-\phi^*(\xi))]  = \mathbb{E}_{\xi \sim \nu_1^*}  [\exp(\phi^*(\xi))] , 
\] 
or equivalently, it is defined in (\ref{error}). 

Define a stopping time $T = \inf\{t>0: \sum_{i=1}^t -\phi^*(\xi_t)>b\} $, then $T_1$ in (\ref{CUSUM-like}) is the same procedure as $T$ and the arguments about $T$ are also true for $T_1$. 
Following the definition of $T$, for any $m>0$, we have that
\begin{equation}
\begin{split}
\mathbb{P}_\infty^{\nu_0} (T\leq m) \leq& \mathbb{P}_\infty^{\nu_0}  \left( \bigcup_{k=1}^m \left\{ \sum_{i=1}^k -\phi^*(\xi_i)>b \right\} \right) \\
\leq & \sum_{k=1}^m \mathbb{P}_\infty^{\nu_0}  \left( \sum_{i=1}^k -\phi^*(\xi_i)>b \right) \\
=& \sum_{k=1}^m \mathbb{P}_\infty^{\nu_0}  \left( \sum_{i=1}^k \left(-\phi^*(\xi_i)-\frac{b}{k}\right)>0\right).
\end{split}
\end{equation}

Fix $m$ and $k$, we define that $\widetilde{\phi}^* = \phi^* + b/k$ and then we use Chernoff inequality and inequality (\ref{eq1_basic}) to obtain that
\begin{equation}
\begin{split}
\mathbb{P}_{\xi \sim \nu} (-\widetilde{\phi}^*(\xi)>0) &\leq \frac{\mathbb{E}_{\xi \sim \nu}[\exp(-\widetilde{\phi}^*(\xi))]}{1} \\
&\leq \exp(-\frac{b}{k}) \epsilon^*, ~\forall \nu \in \mathcal{P}_0.
\end{split}
\end{equation}
Under $H_0$, $\xi_i \sim \nu_0 \in \mathcal{P}_0, i=1,\ldots,m$ and $\xi_i$s are independent. If we apply the shifted detector $\widetilde{\phi}^*$ on the independent variables $\xi_1, \xi_{2},\ldots, \xi_k$, from the result for $k$-repeated observations (Section 2.4 in \cite{goldenshluger2015hypothesis}) , we can have that
\[
\mathbb{P}_\infty^{\nu_0}  \left( \sum_{i=1}^k \left(-\phi^*(\xi_i)-\frac{b}{k}\right)>0\right) \leq \left( \exp\left(-\frac{b}{k}\right) \epsilon^* \right)^{k}.
\]

Then, we have that
\begin{equation}
\begin{split}
\mathbb{P}_\infty^{\nu_0} (T\leq m) \leq & \sum_{k=1}^m \left( \exp\left(-\frac{b}{k}\right) \epsilon^* \right)^{k} \\
=& \sum_{k=1}^m  \exp\left(-b\right) \left( \epsilon^* \right)^{k }, \\
=& \exp(-b) \cdot \frac{\epsilon^* - (\epsilon^*)^{m+1}}{1-\epsilon^*}.
\end{split}
\end{equation}
Letting $m$ go to infinity, we have that
\[
\mathbb{P}_\infty^{\nu_0}  (T< \infty) = \exp(-b) \cdot \frac{\epsilon^*}{(1-\epsilon^*)}.
\]

Applying Theorem $2$ in \cite{lorden1971procedures}, we have that
\[
\mathbb{E}_\infty^{\nu_0}  (T) \geq \frac{1}{\mathbb{P}_\infty^{\nu_0}  (T< \infty)} = \exp(b) \cdot \frac{1-\epsilon^*}{\epsilon^*},
\]
which concludes our result.
\end{proof}

\begin{proof}[Proof of Theorem \ref{EDD1}]

Similar with the proof for Theorem \ref{ARL1}, we define that $\phi^* = -\frac{1}{2}L^*$, $S_t = \sum_{i=1}^t -\phi^*(\xi_t)$  and a stopping time $T = \inf\{t>0: S_t>b\} $. Then $T$ is the same as $T_1$. Noticing that under $\mathbb{P}_{0}^{\nu_0,\nu_1}$, $\xi_1, \xi_2, \ldots$ is a sequence of i.i.d random variables following some distribution $\nu_1 \in \mathcal{P}_1$, the well known Wald's equality (e.g, \cite{siegmund1985sequential}) shows that
\[
\mathbb{E}_0^{\nu_0,\nu_1}[T] = \frac{\mathbb{E}_0^{\nu_0,\nu_1}[S_T]}{\mathbb{E}_{\xi_1 \sim \nu_1}[-\phi_*(\xi_1)]} = \frac{b+\mathbb{E}_0^{\nu_0,\nu_1}[S_T-b]}{\mathbb{E}_{\xi_1 \sim \nu_1}[-\phi_*(\xi_1)]},
\]
where $\mathbb{E}_0^{\nu_0,\nu_1}[S_T-b]$ is the expected overshoot above the decision boundary.

Combining (\ref{eq2_basic}) and the fact that for any $x\in \mathbb{R}$, $-x \geq 1-\exp(x)$, we have that
\[
\mathbb{E}_{\xi_1 \sim \nu_1}[-\phi_*(\xi_1)] \geq 1-\mathbb{E}_{\xi_1 \sim \nu_1} [\exp(\phi_*(\xi_1))] \geq 1-\epsilon^*.
\]
To estimate the overshoot, we apply (8.18) and (8.50) in \cite{siegmund1985sequential} to show that as $b \rightarrow \infty$, the following limit holds,
\[
\mathbb{E}_0^{\nu_0,\nu_1}[S_T-b] \rightarrow \frac{\mathbb{E}_{\xi_1\sim \nu_1}[\phi^*(\xi_1)^2]}{2\mathbb{E}_{\xi_1 \sim \nu_1}[\phi^*(\xi_1)]} - \sum_{n=1}^{\infty} \frac{\mathbb{E}_0^{\nu_0,\nu_1}[S_n^-]}{n},
\]
where $x^- = -\min(x,0)$.

By the assumption made in the statement, we have that for some $M>0$,
$
\mathbb{E}_{\xi_1 \sim \nu_1}[\phi_*^2(\xi_1)] \leq M.
$
Therefore, as $b\rightarrow \infty$, we have that $\mathbb{E}_0^{\nu_0,\nu_1}[S_T-b] = o(b)$. Combing the Theorem 2 in \cite{lorden1971procedures}, we conclude the result.

\end{proof}

\begin{proof}[Proof of Corollary \ref{ARL2} and \ref{EDD2}]
When $\phi^*$ is obtained from (\ref{iquaddet}), from the Proposition 4.1 in \cite{guigues2016change}, we have that 
\begin{equation}
\mathbb{E}_{\xi \sim \nu_0} [\exp(-\phi^*(\xi))] \leq \epsilon^*, ~~ \forall \nu_0 \in \mathcal{G}_0,
\label{eq1_quadratic}
\end{equation}
\begin{equation}
\mathbb{E}_{\xi \sim \nu_1} [\exp(\phi^*(\xi))] \leq \epsilon^*, ~~ \forall \nu_1 \in \mathcal{G}_1, 
\label{eq2_quadratic}
\end{equation}
where $\epsilon^*$ is defined in (\ref{error-2}). 
Then, following the same proof routine as Theorem \ref{ARL1} and \ref{EDD1}, we conclude the results.

\end{proof}

\end{document}